\definecolor{MyLinkColor}{rgb}{0,0,0.4}
\newcommand{\dx}{\mathrm{d}}
\newtheorem{thm}{Theorem}[section]
\theoremstyle{remark} 
\newtheorem{rem}[thm]{Remark}
\numberwithin{equation}{section}   
\title[A steady stratified purely azimuthal flow representing the ACC]{A steady stratified purely azimuthal flow representing the Antarctic Circumpolar Current}
\author[C.~I.~Martin]{Calin Iulian Martin}
\address{Faculty of Mathematics, University of Vienna, Austria \\
	E-mail address: \href{mailto:calin.martin@univie.ac.at}{\textsf{calin.martin@univie.ac.at}}}
\author[R.~Quirchmayr]{Ronald Quirchmayr$^*$}
\address{Department of Mathematics, KTH Royal Institute of Technology, Sweden \\
	E-mail address: \href{mailto:ronaldq@kth.se}{\textsf{ronaldq@kth.se}}}
\thanks{$^*$Corresponding author}
\begin{document}
\maketitle

\begin{abstract}
We construct an explicit steady stratified purely azimuthal flow for the governing equations of geophysical fluid dynamics. These equations are considered in a setting that applies to the Antarctic Circumpolar Current, 
accounting for eddy viscosity and forcing terms.

\vspace{1em}
\noindent
{\bfseries Keywords}: Antarctic Circumpolar Current, variable density, azimuthal flows, eddy viscosity, geophysical fluid dynamics.\\
{\bfseries Mathematics Subject Classification}: Primary: 35Q31, 35Q35. Secondary: 35Q86.
\end{abstract}

\section{Introduction}
\noindent
Thorough analytical investigations of exact solutions to the fully nonlinear governing equations of geophysical fluid dynamics (GFD)  represent  an extensive and active research area, which was initiated by Constantin \cite{CGeoResLett, CoGeoPhys, CoPhysOc13, CoPhysOc14} and Constantin and Johnson \cite{CJ, CJaz, CJazAcc, CJPoF}.
Following this approach, we construct a unidirectional flow satisfying GFD considered in the so-called $f$-plane at the 45th parallel south, enhanced with an eddy viscosity term and a forcing term, and equipped with appropriate boundary conditions. 
We propose this specific flow for representing the gross dynamics of the Antarctic Circumpolar Current (ACC)---the World's longest and strongest Ocean current. 

ACC has no continental barriers: it encircles Antarctica along a 23 000 km path around the polar axis towards East at latitudes between 40° to 60°, see Fig.~\ref{fig: fronts}. It thereby links the Atlantic, Pacific and Indian Oceans making it the most important oceanic current in the Earth’s climate system.
ACC has a rich and complicated structure. Many factors contribute to its complex behavior---the most important driver are the strong westerly winds in the Southern Ocean region. In addition to that there exist mesoscale eddies of a size up to 100 km, which transport the wind-induced surface stress to the bottom and also enable meridional mass transport; there are sharp changes in water density due to variations in temperature and salinity---known as fronts or jets---located at ACC's boundaries (see Fig.~\ref{fig: fronts}); ACC is strongly constrained by the bottom topography; there are observed variations in time such as the Antarctic Circumpolar Wave; etc. We refer to \cite{Danabas, Farneti_etal2010, FCM04, IR96, KN01, Marsh16,TG94, Thompson08, WP96, W99} for further information about the geophysical aspects and modeling as well as observational data and simulations for ACC. 

From an analytical perspective one is forced to largely, yet reasonably, simplify the geophysical scenario to obtain a tractable model, which---in the ideal case---exhibits exact and explicit solutions opening the path for an in-depth analysis. 
Thus we do not account for all of the before mentioned phenomena, but assume a steady flow in purely azimuthal direction, which is vertically bounded by a flat bottom and a flat ocean surface. By considering Euler's equation of motion in the $f$-plane, we obtain a valid approximation of the Coriolis effects close to the 45th parallel south; in this way the Earth's curvature is neglected and no boundaries in the meridional direction are assumed. To account for the transportation effects of mesoscale eddies we equip the system with an eddy viscosity term; furthermore we include a forcing term to ensure the dynamical balance of the flow.
Both pressure and wind stress are prescribed on the ocean surface; a no-slip boundary condition is assumed for the ocean bed.

A similar setting has recently been considered in \cite{Quir}, where an explicit solution in terms of a given viscosity function was presented. In this note at hand we extend these results to stratified flows, i.e.~we do account for variations of the water density (with depth and latitude). The established explicit solution is an analytic function of both the viscosity function and the density distribution.
A collection of numerous recent related analytical studies can be found in \cite{CJazAcc, HazMary, HazDCDS, HsuMarAcc, Mary, Quir} and the references therein.

\begin{figure}[h]
\begin{center}
\includegraphics*[width=0.65\textwidth]{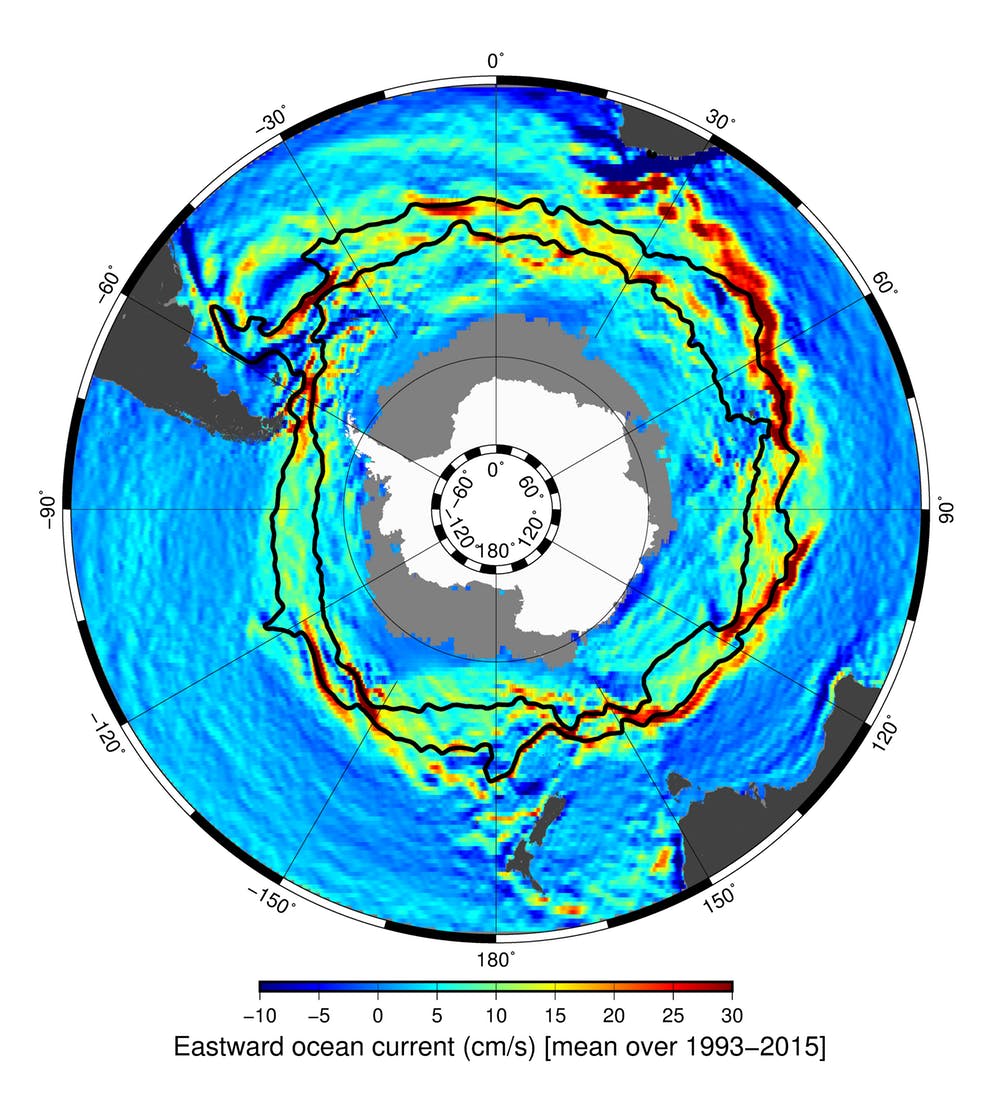}
\end{center}
\caption{Variations in density and the mean eastward current speed. The black lines in the image represent the fronts of ACC. Image credit: Hellen Phillips (Senior Research Fellow, Institute for Marine and Antarctic Studies, 
University of Tasmania), Benoit Legresy (CSIRO) and Nathan Bindoff (Professor of Physical Oceanography, Institute for Marine and Antarctic Studies, University of Tasmania)} 
\label{fig: fronts}
\end{figure}

\section{Model under study}
\noindent
We introduce the governing equations for geophysical ocean flows that set the basis for our study:
we take into account the effects of the Earth's rotation on the fluid body by choosing a rotating framework with 
the origin at a point on the Earth's surface. Accordingly, we will work with Cartesian coordinates $(x,y,z)$, where $x$ denotes the direction of increasing longitude, $y$ is the direction of increasing latitude and $z$ represents the local vertical,
respectively. Denoting with $t$ the time variable, and with $(u(x,y,z,t),v(x,y,z,t),w(x,y,z,t))$ the velocity field, the governing equations 
for inviscid and incompressible geophysical ocean flows at latitude $\phi$ are (cf.~\cite{CoPhysOc13, Pedl, Val}) 
 the Euler equations 
\begin{align}\label{full_geophys_eq}
\begin{aligned}
  u_{t}+u u_{x}+v u_{y}+w u_{z} +2\Omega (w\cos\phi- v\sin\phi) & =  - \frac{1}{\rho} P_x,\\
  v_{t}+u v_{x}+v v_{y}+w v_{z} +2\Omega u\sin\phi & =  - \frac{1}{\rho} P_y,\\
 w_{t}+u w_{x}+v w_{y}+ w w_{z} -2\Omega u\cos\phi & = - \frac{1}{\rho} P_z-g,
  \end{aligned}
\end{align}
and  the equation of mass conservation
\begin{equation}\label{cont_eq}
 {\rm div}(\rho {\bf u})=0,
\end{equation}
where ${\bf u}=(u,v,w)$. Here, $P=P(x,y,z,t)$ denotes the pressure field, $\rho=\rho(y,z)$ serves a (prescribed) density distribution to account for density gradients, mainly caused by variations of salinity and temperature, $\Omega \approx 7.29\cdot 10^{-5} \, \mathrm{rad} \, s^{-1}$ is the Earth's (constant) rotational speed around the polar axis toward the East, and $g\approx 9.82 \, m/s^2$ denotes the gravitational constant.
The governing equations \eqref{full_geophys_eq}--\eqref{cont_eq} hold throughout the fluid domain, which lies between the rigid flat bed at $z=-d$ ($d$ is the constant water depth) and above the flat surface situated at $z=0$. 

Our aim is to derive exact formulas for purely azimuthal flows (i.e.~$v=w=0$) in the region of the ACC (i.e.~we take $\phi=  -\pi/4$, which corresponds to the 45th parallel south). 
Furthermore we incorporate the transfer of the wind-generated surface stress to the bottom, which is due to the presence of mesoscale eddies,
by adding a viscosity term of the form $(\nu u_z)_z$ to the right hand side of the first equation in \eqref{full_geophys_eq}; the coefficient $\nu=\nu(z)$ is a smooth function of depth being strictly greater than some positive constant, see~\cite{IR96}.
The classical model of uniform eddy viscosity is due to~\cite{Stommel1960}. We follow the more realistic approach with a depth dependent viscosity function as it was introduced in~\cite{CK09}.
Finally, we include a forcing term $F=F(y,z)$ to guarantee non-trivial solutions, cf.~\cite{CJazAcc, How, Quir}. 

According to the previous considerations, we will consider the following set of equations governing the ACC as a geophysical flow model in the purely azimuthal direction:
\begin{align}
0&=-\frac{1}{\rho}P_x+(\nu u_z)_z \label{one} \\
-\sqrt{2}\Omega u &=-\frac{1}{\rho}P_y +F \label{two} \\
-\sqrt{2}\Omega u & =-\frac{1}{\rho}P_z-g \label{three} \\
u_x &=0\label{four}
\end{align}
and are valid within the fluid domain $\mathcal{D}:=\mathbb{R}^2\times [-d,0]\subset\mathbb{R}^3$.

The equations of motion \eqref{one}-\eqref{four} are supplemented by the following boundary conditions:
\begin{align}
	u=0 \quad  &\text{on} \quad z=-d,\label{bottom_cond}\\
	P=P_{\rm atm} \quad  &\text{on} \quad  z=0,\label{surf_cond}\\
	\tau=\tau_0 \quad  &\text{on} \quad z=0, \label{wstress}
\end{align}
where $\tau(y,z):=\rho\nu u_z$ represents the \emph{wind stress}; i.e.~we assume a no-slip bottom and constant pressure as well as wind stress at the surface.

\section{Explicit solution}
\noindent
\begin{thm} \label{prop}
	The solution $(u,P)$ of system \eqref{one}--\eqref{four} with  boundary conditions \eqref{bottom_cond}--\eqref{wstress} is given by 
	\begin{equation}\label{formula_u}
	u(y,z)=\frac{\tau_0}{\rho(y,0)}\int_{-d}^z\frac{\dx \zeta}{\nu(\zeta)} 
	\end{equation} 
	and
\begin{align}\label{formula_P}
	\begin{aligned}
		P(y,z)=  \sqrt{2}\Omega \int_{0}^z \rho(y,\zeta) u(y,\zeta) \, \dx \zeta -  g \int_{0}^z  \rho(y,\zeta) \, \dx \zeta +
		P_{\rm atm},
	\end{aligned}
\end{align} 
	for $(y,z)\in\mathbb{R}\times [-d,0]$. Furthermore, the forcing term $F$ can be recovered from \eqref{two} by means 
	of \eqref{formula_u} and \eqref{formula_P}.
\end{thm}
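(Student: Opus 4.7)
The plan is to derive the formulas rather than merely verify them. I would start by exploiting equation \eqref{four}, which immediately gives $u=u(y,z)$. Substituting this into \eqref{one} yields $P_x=\rho(y,z)(\nu(z)u_z)_z$, so $P_x$ depends only on $y$ and $z$; therefore
\begin{equation*}
P(x,y,z)=x\,\rho(y,z)(\nu u_z)_z+Q(y,z)
\end{equation*}
for some function $Q$. Since the left-hand side of \eqref{three} is independent of $x$, the same must hold for $P_z$, forcing $\partial_z\bigl[\rho(\nu u_z)_z\bigr]=0$. Evaluating at $z=0$ and using the surface condition \eqref{surf_cond} (which must hold for every $x$) gives $\rho(y,0)(\nu u_z)_z|_{z=0}=0$, and combined with the preceding $z$-independence this shows $(\nu u_z)_z\equiv 0$ throughout $\mathcal{D}$.

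From this key reduction it follows that $\nu(z)u_z(y,z)=c(y)$ for some function $c$. Integrating in $z$ and imposing the no-slip condition \eqref{bottom_cond} produces
\begin{equation*}
u(y,z)=c(y)\int_{-d}^{z}\frac{\dx\zeta}{\nu(\zeta)},
\end{equation*}
and the wind-stress condition \eqref{wstress} evaluated at $z=0$ pins down $c(y)=\tau_0/\rho(y,0)$, yielding \eqref{formula_u}.

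Having established $(\nu u_z)_z=0$, equation \eqref{one} gives $P_x=0$, so $P=P(y,z)$. Equation \eqref{three} now reads $P_z=\rho(\sqrt{2}\Omega u-g)$; integrating from $0$ to $z$ and using $P(y,0)=P_{\rm atm}$ delivers the formula \eqref{formula_P}. The forcing term is then simply defined by $F:=\frac{1}{\rho}P_y+\sqrt{2}\Omega u$, which makes \eqref{two} hold by construction.

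The step I expect to be the main (though modest) obstacle is the rigorous justification that $(\nu u_z)_z=0$: one has to combine \eqref{one}, \eqref{three}, and the $x$-independence of $P_{\rm atm}$ in the right order to rule out any $x$-linear contribution to $P$. Once this is in place, everything else reduces to one-dimensional integrations in $z$ against the prescribed boundary data. A brief verification at the end---checking that the displayed $u$ satisfies \eqref{bottom_cond}, produces $\nu u_z=\tau_0/\rho(y,0)$ independent of $z$, and hence returns $\tau=\tau_0$ at $z=0$ via the definition of $\tau$---confirms consistency of the construction.
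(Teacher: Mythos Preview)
Your proof is correct and follows essentially the same route as the paper: both first establish $P_x=0$ in $\mathcal D$ (the paper by differentiating \eqref{one}--\eqref{three} in $x$ to obtain $\nabla P_x=0$ and then invoking \eqref{surf_cond}; you via the equivalent argument using \eqref{one}, \eqref{three}, and \eqref{surf_cond}), and then integrate $(\nu u_z)_z=0$ and \eqref{three} against the boundary data exactly as you describe. One small slip: solving \eqref{two} gives $F=\tfrac{1}{\rho}P_y-\sqrt{2}\,\Omega u$, not $\tfrac{1}{\rho}P_y+\sqrt{2}\,\Omega u$.
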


\begin{proof}
Utilizing \eqref{four} we differentiate by $x$ in \eqref{one}-\eqref{three} and obtain that
\begin{equation}\label{grad_Px}
\nabla P_x=0\,\,{\rm in}\,\,\mathcal{D}.
\end{equation}
From the condition $P=P_{atm}$ on the surface $z=0$ we obtain that $P_x=0$ on $z=0$. Thus, from \eqref{grad_Px} we see that 
\begin{equation*}
 P_x=0\,\,{\rm in}\,\,\mathcal{D}.
\end{equation*}
Hence, equation \eqref{one} becomes $(\nu u_z)_z=0$, which implies (making also use of \eqref{four}) that there is a funtion $y\mapsto B(y)$ such that 
\begin{equation*}
\nu u_z=B(y)\,\, \text{in}\,\,\mathcal{D}.
\end{equation*}
The latter equation and condition \eqref{wstress} imply that $$\nu(0)u_z(y,0)=\frac{\tau_0}{\rho(y,0)}=B(y)\,\,{\rm for}\,\,{\rm all}\,\,y.$$ 
Therefore, $\nu(z)u_z(y,z)=\frac{\tau_0}{\rho(y,0)}$ for all $y$ and for all $z\in [-d,0]$. Thus, by means of the bottom boundary condition \eqref{bottom_cond} we infer that $u$ satisfies \eqref{formula_u}.

Integrating with respect to $z$ in \eqref{three} we obtain that
\begin{equation*}
 P(y,z)=  \sqrt{2}\Omega \int_{-d}^z \rho(y,\zeta) u(y,\zeta) \, \dx \zeta -  g \int_{-d}^z  \rho(y,\zeta) \, \dx \zeta + C(y),
\end{equation*}
where the integration constant $C(y)$ is determined by \eqref{surf_cond} and satisfies 
\begin{equation*}
 C(y)=P_{\rm atm} -\sqrt{2}\Omega \int_{-d}^0 \rho(y,z) u(y,z) \, \dx z +  g \int_{-d}^0  \rho(y,z) \, \dx z.
\end{equation*}
Therefore, $P$ satisfies \eqref{formula_P} for all $(y,z)\in\mathbb{R}\times [-d,0]$.
\end{proof}

\begin{rem}
One immediate consequence of \eqref{formula_u} is that the vorticity vector associated with the flow \eqref{formula_u} is $(0,u_z, -u_y)$ has a non-vanishing second and third component. 
This represents a marked difference, if compared with the case of homogeneous flows (considered in \cite{Quir}) where only the middle component $u_z$ survives, the first and the third being zero because of the lack of $y$ dependence of $\rho$. Thus, allowing for significant variations in density leads to solutions that exhibit substantial shear not only in the vertical direction but also in the latitudinal direction, as well.
\end{rem}

\vspace{1em}
\noindent
\textbf{Acknowledgments}\\
\noindent
C.~I.~Martin would like to acknowledge the support of the Austrian Science Fund (FWF) under research grant P 30878-N32. R.~Quirchmayr acknowledges the support of FWF under research grant J 4339-N32.

\end{document}